\newcommand{\A}{\mathrm{A}}
\newcommand{\B}{\mathrm{B}}
\newcommand{\ato}{\overset{\mathrm{a.s.}}{\to}}
\newtheorem{assumption}{Assumption}
\newtheorem{theorem}{Theorem}
\title{Long-Memory Message-Passing for Spatially Coupled Systems}
\name{Keigo Takeuchi\thanks{The author was in part supported by the Grant-in-Aid 
for Scientific Research~(B) (JSPS KAKENHI Grant Numbers 21H01326), Japan.}}
\address{Toyohashi University of Technology, 
Dept.\ Electrical and Electronic Inf.\ Eng., 
Toyohashi, Japan}
\begin{document}
%
\maketitle
\begin{abstract}
This paper addresses the reconstruction of sparse signals from spatially 
coupled, linear, and noisy measurements. A unified framework of rigorous 
state evolution is established for developing long-memory message-passing 
(LM-MP) in spatially coupled systems. LM-MP utilizes all previous messages to 
compute the current message while conventional MP only uses the latest 
messages. The unified framework is utilized to propose orthogonal approximate 
message-passing (OAMP) for spatially coupled systems. The framework for 
LM-MP is used as a technical tool to prove the convergence of state 
evolution for OAMP. Numerical results show that OAMP for spatially coupled 
systems is superior to that for systems without spatial 
coupling in the so-called waterfall region.
\end{abstract}
\begin{keywords}
Compressed sensing, spatial coupling, long-memory message-passing, 
state evolution
\end{keywords}
\section{Introduction}
This paper addresses the reconstruction of $L$ unknown $N$-dimensional 
sparse signal vectors $\{\boldsymbol{x}[l]\in\mathbb{R}^{N}: 
l\in\mathcal{L}_{0}=\{0,\ldots,L-1\}\}$ from spatially 
coupled $M$-dimensional measurements 
$\{\boldsymbol{y}[\ell]\in\mathbb{R}^{M}: \ell\in\mathcal{L}_{W}
=\{0,\ldots,L+W-1\}\}$ with coupling 
width~$W$~\cite{Kudekar11,Krzakala12,Donoho13,Takeuchi15}, given 
by\footnote{Any variable with indices is regarded as zero outside the domain 
of the indices. For instance, $\boldsymbol{A}[\ell][l]=\boldsymbol{O}$ holds 
for all $(\ell, l)\notin\mathcal{L}_{W}\times\mathcal{L}_{0}$. 
} 
\begin{equation} \label{spatial_coupling}
\boldsymbol{y}[\ell] 
= \sum_{w=0}^{W}\gamma[\ell][\ell-w]\boldsymbol{A}[\ell][\ell-w]
\boldsymbol{x}[\ell-w] + \boldsymbol{w}[\ell].
\end{equation}
In (\ref{spatial_coupling}), $\boldsymbol{A}[\ell][l]\in\mathbb{R}^{M\times N}$ 
denotes a known sensing matrix in row section $\ell\in\mathcal{L}_{W}$ and 
column section~$l\in\mathcal{L}$. The additive white Gaussian noise (AWGN) 
vector $\boldsymbol{w}[\ell]\sim\mathcal{N}(\boldsymbol{0},
\sigma^{2}\boldsymbol{I}_{M})$ has independent zero-mean Gaussian elements 
with variance $\sigma^{2}$. The coupling coefficient 
$\gamma[\ell][l]\in\mathbb{R}$ satisfies the normalization condition 
$L^{-1}\sum_{l=0}^{L-1}\sum_{w=0}^{W}\gamma^{2}[l+w][l] = 1$. 
The spatially coupled system~(\ref{spatial_coupling}) may be regarded as a 
mathematical model of cell-free massive multiple-input 
multiple-output (MIMO)~\cite{Ngo17} or sparse superposition 
codes~\cite{Joseph12,Barbier17,Rush21}. 

Approximate message-passing (AMP)~\cite{Krzakala12,Donoho13,Takeuchi15} is 
low-complexity message-passing (MP) for signal recovery in spatially 
coupled systems. When the sensing matrices $\{\boldsymbol{A}[\ell][l]\}$ 
have independent and identically distributed (i.i.d.) elements with zero mean, 
Bayes-optimal AMP was proved to achieve the information-theoretic compression 
limit in the noiseless case~\cite{Donoho13,Bayati11,Javanmard13}. However, 
AMP fails to converge for the other sensing matrices, such as the 
ill-conditioned~\cite{Rangan191} or non-zero mean~\cite{Vila15} case. 

Orthogonal AMP~\cite{Ma17} or equivalently vector AMP~\cite{Rangan192} solves 
this convergence issue in AMP: Bayes-optimal orthogonal/vector AMP was proved 
to converge~\cite{Takeuchi221,Liu221} and achieve the Bayes-optimal 
performance~\cite{Rangan192,Takeuchi201} for orthogonally invariant 
sensing matrices---a generalization of zero-mean i.i.d.\ Gaussian matrices. 
However, Bayes-optimal orthogonal/vector AMP requires the 
high-complexity linear minimum mean-square error (LMMSE) filter.  

Long-memory (LM) MP~\cite{Opper16,Takeuchi202,Takeuchi21,Liu222,Skuratovs22,Fan22,Venkataramanan22} is an attractive approach to realize 
the advantages of both AMP and orthogonal/vector AMP: low complexity and 
Bayes-optimality. LM-MP in \cite{Liu222,Skuratovs22} utilizes the same 
low-complexity filter as AMP and all previous messages to approximate the 
LMMSE filter while orthogonal/vector AMP uses the LMMSE 
filter to construct a sufficient statistic only from the latest 
messages~\cite{Takeuchi221,Liu221}. As a result, LM-MP in \cite{Liu222} was 
proved to achieve the Bayes-optimal performance for orthogonally invariant 
sensing matrices. Since the complexity issue is outside the main scope of 
this paper, however, it is left as future research for spatially coupled 
systems.  
 
Another advantage of LM-MP is the convergence guarantee: Use of LM 
damping~\cite{Liu222} guarantees the convergence of state evolution for 
LM-MP~\cite{Takeuchi221,Liu221}. This property of LM-MP was utilized as a 
technical tool to prove the convergence of state evolution for Bayes-optimal 
orthogonal/vector AMP without memory~\cite{Takeuchi221,Liu221}. This paper 
uses LM-MP as a tool for the convergence guarantee.  

The purpose of this paper is to establish a unified framework of state 
evolution for LM-MP in spatially coupled systems. The framework is a 
generalization of state evolution for spatially coupled 
i.i.d.\ Gaussian matrices~\cite{Javanmard13} to the orthogonal invariance 
and LM-MP cases. It is also a generalization of state evolution 
for the orthogonal invariance and LM-MP cases~\cite{Takeuchi21} to the 
spatial coupling case. 

As a memoryless instance of the unified framework, this paper proposes 
orthogonal AMP (OAMP) for spatially coupled systems. The framework for LM-MP 
is utilized to prove the convergence of state evolution for the proposed OAMP.  
Numerical results are presented to show the superiority of OAMP for 
spatially coupled systems to that for conventional systems without spatial 
coupling. 

\section{Unified Framework of State Evolution}
To present a unified framework of state evolution for the spatially coupled 
system~(\ref{spatial_coupling}), we transform (\ref{spatial_coupling}) into 
a vector system. Let $\mathcal{W}[\ell]=\{\max\{\ell-(L-1),0\},\ldots,
\min\{W,\ell\}\}$ denote the set of indices for row section~$\ell$ such that  
$\{\boldsymbol{A}[\ell][\ell-w]: w\in\mathcal{W}[\ell]\}$ are non-zero 
matrices in row section~$\ell$ of the spatially coupled 
system~(\ref{spatial_coupling}). For $N_{\mathrm{c}}[\ell]=|\mathcal{W}[\ell]|N$, 
the matrix $\boldsymbol{A}[\ell]\in\mathbb{R}^{M\times N_{\mathrm{c}}[\ell]}$ consists 
of the normalized matrices 
$|\mathcal{W}[\ell]|^{-1/2}\boldsymbol{A}[\ell][\ell-w]$ for 
all $w\in\mathcal{W}[\ell]$. Then, the spatially coupled 
system~(\ref{spatial_coupling}) is transformed into 
\begin{equation} \label{vector_system}
\boldsymbol{y}[\ell] 
= \boldsymbol{A}[\ell]\vec{\boldsymbol{x}}[\ell] + \boldsymbol{w}[\ell], 
\end{equation}
with
\begin{equation} \label{x_vec}
\vec{\boldsymbol{x}}[\ell] 
= \sqrt{|\mathcal{W}[\ell]}
\mathrm{vec}\{
\gamma[\ell][\ell-w]\boldsymbol{x}[\ell-w]: w\in\mathcal{W}[\ell]\}, 
\end{equation}
where the notation $\mathrm{vec}\{\boldsymbol{v}_{i}: i=1,\ldots,n\}$ denotes 
the column $(\boldsymbol{v}_{1}^{\mathrm{T}},\ldots,
\boldsymbol{v}_{n}^{\mathrm{T}})^{\mathrm{T}}$. The system~(\ref{vector_system}) 
looks like parallel systems without spatial coupling when we ignore 
the dependencies between $\{\vec{\boldsymbol{x}}[\ell]\}$ through the 
signal vectors $\{\boldsymbol{x}[l]\}$ in (\ref{x_vec}). 

We present the notation used in the proposed framework of state evolution. 
Consider the singular-value decomposition (SVD) 
$\boldsymbol{A}[\ell]=\boldsymbol{U}[\ell]\boldsymbol{\Sigma}[\ell]
\boldsymbol{V}^{\mathrm{T}}[\ell]$. We define 
$\vec{\boldsymbol{\lambda}}[\ell]\in\mathbb{R}^{N_{\mathrm{c}}[\ell]}$ as 
the vector that consists of all diagonal elements of 
$\boldsymbol{\Sigma}^{\mathrm{T}}[\ell]\boldsymbol{\Sigma}[\ell]$. 
For $N_{\mathrm{c}}=(W+1)N$, we define the set of extended vectors $\Lambda=
\{[\vec{\boldsymbol{\lambda}}^{\mathrm{T}}[\ell],
\boldsymbol{0}^{\mathrm{T}}]^{\mathrm{T}}\in\mathbb{R}^{N_{\mathrm{c}}}: 
\ell\in\mathcal{L}_{W}\}$. The extended vectors are used for a unified 
treatment of $\{\vec{\boldsymbol{\lambda}}[\ell]\}$ in both bulk and boundary 
sections. Similarly, we define the set of extended noise vectors as 
$\Omega=\{[(\boldsymbol{U}^{\mathrm{T}}[\ell]\boldsymbol{w}[\ell])^{\mathrm{T}},
\boldsymbol{0}^{\mathrm{T}}]^{\mathrm{T}}\in\mathbb{R}^{N_{\mathrm{c}}}: 
\ell\in\mathcal{L}_{W}\}$. The set of extended signal vectors is written 
as $\mathcal{X}=\{[(\boldsymbol{S}_{\mathrm{x}}[\ell]\mathrm{vec}
\{\boldsymbol{x}[l]: l\in\mathcal{L}_{0}\})^{\mathrm{T}}, 
\boldsymbol{0}^{\mathrm{T}}]^{\mathrm{T}}\in\mathbb{R}^{N_{\mathrm{c}}}: 
\ell\in\mathcal{L}_{W}\}$ for deterministic selection matrices 
$\{\boldsymbol{S}_{\mathrm{x}}[\ell]\in\{0, 1\}^{N_{\mathrm{c}}[\ell]\times LN}\}$, 
which select $N_{\mathrm{c}}[\ell]$ different elements from an $LN$-dimensional 
vector multiplied from the right side. 
The selection matrices are used for flexibility of the unified framework.  

Suppose that the dynamics of estimation errors for LM-MP can be described 
with a dynamical system with respect to four vectors 
$\{\vec{\boldsymbol{b}}_{t}[\ell], 
\vec{\boldsymbol{m}}_{t}^{\mathrm{ext}}[\ell], \vec{\boldsymbol{h}}_{t}[\ell], 
\vec{\boldsymbol{q}}_{t}^{\mathrm{ext}}[\ell]\in\mathbb{R}^{N_{\mathrm{c}}[\ell]}\}$ 
for iteration~$t$ and row section~$\ell\in\mathcal{L}_{W}$. We define the 
matrix $\vec{\boldsymbol{B}}_{t}[\ell]=(\vec{\boldsymbol{b}}_{0}[\ell],\ldots,
\vec{\boldsymbol{b}}_{t-1}[\ell])\in\mathbb{R}^{N_{\mathrm{c}}[\ell]\times t}$ 
for all previous iterations and the set of extended matrices 
$\mathcal{B}_{t}=\{[\vec{\boldsymbol{B}}_{t}^{\mathrm{T}}[\ell],
\boldsymbol{O}]^{\mathrm{T}}\in\mathbb{R}^{N_{\mathrm{c}}\times t}: 
\ell\in\mathcal{L}_{W}\}$. Similarly, we define 
$\vec{\boldsymbol{H}}_{t}[\ell]\in\mathbb{R}^{N_{\mathrm{c}}[\ell]\times t}$ and 
$\mathcal{H}_{t}$. For two vector-valued functions 
$\boldsymbol{\phi}_{t}[\ell]:\mathbb{R}^{N_{\mathrm{c}}\times (t+3)|\mathcal{L}_{W}|}
\to\mathbb{R}^{N_{\mathrm{c}}}$ and $\boldsymbol{\psi}_{t}[\ell]:
\mathbb{R}^{N_{\mathrm{c}}\times (t+2)|\mathcal{L}_{W}|}\to\mathbb{R}^{N_{\mathrm{c}}}$, 
the proposed dynamical system with a general initial condition 
$\vec{\boldsymbol{q}}_{0}^{\mathrm{ext}}[\ell]
=(\boldsymbol{I}_{N_{\mathrm{c}}[\ell]}, \boldsymbol{O})
\boldsymbol{\psi}_{-1}[\ell](\mathcal{X})$ is given by 
\begin{align}
\vec{\boldsymbol{b}}_{t}[\ell]
=&\boldsymbol{V}^{\mathrm{T}}[\ell]\vec{\boldsymbol{q}}_{t}^{\mathrm{ext}}[\ell],
\label{b} \\
\vec{\boldsymbol{m}}_{t}^{\mathrm{post}}[\ell]
=& (\boldsymbol{I}_{N_{\mathrm{c}}[\ell]}, \boldsymbol{O})
\boldsymbol{\phi}_{t}[\ell](\mathcal{B}_{t+1}, \Omega, \Lambda),  
\label{m_post} \\
\vec{\boldsymbol{m}}_{t}^{\mathrm{ext}}[\ell] 
=& \vec{\boldsymbol{m}}_{t}^{\mathrm{post}}[\ell]
- \sum_{\tau=0}^{t}\xi_{\A,\tau,t}[\ell]\vec{\boldsymbol{b}}_{\tau}[\ell], 
\label{m_ext} \\
\vec{\boldsymbol{h}}_{t}[\ell] 
=& \boldsymbol{V}[\ell]\vec{\boldsymbol{m}}_{t}^{\mathrm{ext}}[\ell],
\label{h} \\
\vec{\boldsymbol{q}}_{t+1}^{\mathrm{post}}[\ell]
=& (\boldsymbol{I}_{N_{\mathrm{c}}[\ell]}, \boldsymbol{O})
\boldsymbol{\psi}_{t}[\ell](\mathcal{H}_{t+1}, \mathcal{X}),  
\label{q_post}
 \\
\vec{\boldsymbol{q}}_{t+1}^{\mathrm{ext}}[\ell] 
=& \vec{\boldsymbol{q}}_{t+1}^{\mathrm{post}}[\ell]
- \sum_{\tau=0}^{t}\xi_{\B,\tau,t}[\ell]\vec{\boldsymbol{h}}_{\tau}[\ell], 
\label{q_ext}
\end{align}
The coefficients $\xi_{\A,\tau,t}[\ell]$ and $\xi_{\B,\tau,t}[\ell]$ 
in the Onsager correction~(\ref{m_ext}) and (\ref{q_ext}) 
are defined as   
\begin{align}
\xi_{\A,\tau,t}[\ell] 
=& \left\langle
\frac{\partial}{\partial \vec{\boldsymbol{b}}_{\tau}[\ell]}
\boldsymbol{\phi}_{t}[\ell](\mathcal{B}_{t+1}, \Omega, \Lambda)
\right\rangle, \\
\xi_{\B,\tau,t}[\ell] 
=& \left\langle
\frac{\partial}
{\partial\vec{\boldsymbol{h}}_{\tau}[\ell]}
\boldsymbol{\psi}_{t}[\ell](\mathcal{H}_{t+1}, \mathcal{X})
\right\rangle,
\label{xi_B}
\end{align}
where for $\boldsymbol{f}:\mathbb{R}^{n}\to
\mathbb{R}^{n}$ we have used  
$[\partial\boldsymbol{f}(\boldsymbol{x})/\partial\boldsymbol{x}]_{i}
=\partial [\boldsymbol{f}]_{i}/\partial [\boldsymbol{x}]_{i}$  and 
$\langle \partial\boldsymbol{f}(\boldsymbol{x})/\partial\boldsymbol{x}\rangle
=n^{-1}\sum_{i=1}^{n}\partial [\boldsymbol{f}]_{i}/\partial [\boldsymbol{x}]_{i}$. 

The general error model~(\ref{b})--(\ref{q_ext}) for the spatial coupling 
case is a generalization of a conventional error model without spatial 
coupling~\cite{Takeuchi21}. The conventional error model was utilized to 
propose existing LM-MP 
algorithms~\cite{Takeuchi202,Takeuchi21,Liu222,Skuratovs22}. 
By designing the two functions $\phi_{t}[\ell]$ and $\psi_{t}[\ell]$ 
appropriately, the general error model~(\ref{b})--(\ref{q_ext}) can represent 
the dynamics of estimation errors for LM-MP.   

To present state evolution for the general error 
model~(\ref{b})--(\ref{q_ext}), we postulate the following assumptions: 
\begin{assumption} \label{assumption_x} 
For some $\epsilon>0$, the signal vector $\boldsymbol{x}[l]$ has i.i.d.\ 
elements with zero mean, unit variance, and a bounded $(2+\epsilon)$th 
moment.  
\end{assumption}
\begin{assumption} \label{assumption_A}
The sensing matrices $\{\boldsymbol{A}[\ell]\}$ in (\ref{vector_system}) are 
independent for all $\ell$. Each $\boldsymbol{A}[\ell]$ is right-orthogonally 
invariant: $\boldsymbol{V}[\ell]$ in the SVD is independent 
of $\boldsymbol{U}[\ell]\boldsymbol{\Sigma}[\ell]$ and Haar-distributed. 
Furthermore, the empirical eigenvalue distribution of $|\mathcal{W}[\ell]|
\boldsymbol{A}^{\mathrm{T}}[\ell]\boldsymbol{A}[\ell]$ converges almost surely 
to a compactly supported deterministic distribution with unit mean 
in the large system limit, where both $M$ and $N$ tends to infinity with 
the compression rate $\delta=M/N\in(0, 1]$ kept constant.
\end{assumption}
\begin{assumption} \label{assumption_function}
The function $\boldsymbol{\phi}_{t}[\ell]$ is separable with respect to 
all variables and proper\footnote{
Let $L_{i}>0$ denote a Lipschitz-constant for a $k$th-order pseudo-Lipschitz 
function $f_{i}:\mathbb{R}^{n}\to\mathbb{R}$: For all 
$\boldsymbol{x}, \boldsymbol{y}\in\mathbb{R}^{n}$, $|f_{n}(\boldsymbol{x})
-f_{n}(\boldsymbol{y})|\leq L_{n}(1 + \|\boldsymbol{x}\|^{k-1} 
+ \|\boldsymbol{y}\|^{k-1})\|\boldsymbol{x} - \boldsymbol{y}\|$ holds. 
The function $\boldsymbol{f}=(f_{1},\ldots,f_{n})^{\mathrm{T}}$ is said to be 
proper if $\limsup_{n\to\infty}n^{-1}\sum_{i=1}^{n}L_{i}^{j}<\infty$ holds for all 
$j\in\mathbb{N}$.} Lipschitz-continuous with 
respect to $\mathcal{B}_{t+1}$ and $\Omega$ while $\boldsymbol{\psi}_{t}[\ell]$ 
is separable and proper Lipschitz-continuous with respect to all variables.  
Furthermore, $\|\vec{\boldsymbol{m}}_{t}^{\mathrm{ext}}[\ell]\|\neq0$ and 
$\|\vec{\boldsymbol{q}}_{t+1}^{\mathrm{ext}}[\ell]\|\neq0$ hold for all $t$. 
\end{assumption}
\begin{theorem} \label{theorem_SE} 
Postulate Assumptions~\ref{assumption_x}, \ref{assumption_A}, 
and~\ref{assumption_function}, and suppose that  
$\tilde{\boldsymbol{\phi}}_{t}(\mathcal{B}_{t+1},\Omega, \Lambda):
\mathbb{R}^{N_{\mathrm{c}}\times(t+3)|\mathcal{L}_{W}|}\to\mathbb{R}^{N_{\mathrm{c}}}$ 
is separable, second-order pseudo-Lipschitz with respect to 
$\mathcal{B}_{t+1}$ and $\Omega$, and proper. Similarly, suppose that 
$\tilde{\boldsymbol{\psi}}_{t}(\mathcal{H}_{t+1}, 
\mathcal{X}):\mathbb{R}^{N_{\mathrm{c}}\times(t+2)|\mathcal{L}_{W}|}
\to\mathbb{R}^{N_{\mathrm{c}}}$ 
is a separable, second-order pseudo-Lipschitz, and proper function. Then,  
\begin{align} 
\langle\tilde{\boldsymbol{\phi}}_{t}(\mathcal{B}_{t+1}, \Omega, \Lambda)
\rangle
-& \mathbb{E}\left[
 \langle\tilde{\boldsymbol{\phi}}_{t}(\mathcal{Z}_{\A,t+1}, 
 \Omega,\Lambda)\rangle 
\right]\ato 0, 
\label{phi_SLLN} \\
\langle\tilde{\boldsymbol{\psi}}_{t}(\mathcal{H}_{t+1}, \mathcal{X})
\rangle
-& \mathbb{E}\left[
 \langle\tilde{\boldsymbol{\psi}}_{t}(\mathcal{Z}_{\B,t+1},\mathcal{X})
 \rangle 
\right]\ato 0. \label{psi_SLLN}
\end{align}
In (\ref{phi_SLLN}), the set $\mathcal{Z}_{\A,t+1}=\{
[\vec{\boldsymbol{Z}}_{\A,t+1}^{\mathrm{T}}[\ell], \boldsymbol{O}]^{\mathrm{T}}
\in\mathbb{R}^{N_{\mathrm{c}}\times(t+1)}: \ell\in\mathcal{L}_{W}\}$ is composed 
of independent matrices for all $\ell$. 
Each matrix $\vec{\boldsymbol{Z}}_{\A,t+1}[\ell]
=(\vec{\boldsymbol{z}}_{\A,0}[\ell],\ldots,
\vec{\boldsymbol{z}}_{\A,t}[\ell])\in\mathbb{R}^{N_{\mathrm{c}}[\ell]\times (t+1)}$ 
has zero-mean Gaussian random 
vectors with covariance $\mathbb{E}[\vec{\boldsymbol{z}}_{\A,\tau}[\ell]
\vec{\boldsymbol{z}}_{\A,\tau'}^{\mathrm{T}}[\ell]]
=c_{\tau',\tau}[\ell]\boldsymbol{I}_{N_{\mathrm{c}}[\ell]}$ for all 
$\tau', \tau\in\{0,\ldots,t\}$, with 
$N_{\mathrm{c}}^{-1}[\ell](\vec{\boldsymbol{q}}_{\tau'}^{\mathrm{ext}}[\ell])^{\mathrm{T}}
\vec{\boldsymbol{q}}_{\tau}^{\mathrm{ext}}[\ell]\ato c_{\tau',\tau}[\ell]$.
In (\ref{psi_SLLN}), $\mathcal{Z}_{\B,t+1}$ is 
defined in the same manner as for $\mathcal{Z}_{\A,t+1}$ with the 
exception of $N_{\mathrm{c}}^{-1}[\ell](\vec{\boldsymbol{m}}_{\tau'}^{\mathrm{ext}}
[\ell])^{\mathrm{T}}\vec{\boldsymbol{m}}_{\tau}^{\mathrm{ext}}[\ell]
\ato c_{\tau',\tau}[\ell]$. 
\end{theorem}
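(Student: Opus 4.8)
The plan is to prove~(\ref{phi_SLLN}) and~(\ref{psi_SLLN}) simultaneously by induction on the iteration index~$t$, adapting the conditioning technique for long-memory message-passing of~\cite{Takeuchi21} to the spatial-coupling setting. The induction hypothesis asserts that, up to iteration~$t$, the empirical joint distribution of the columns of $\mathcal{B}_{t+1}$ (resp.\ $\mathcal{H}_{t+1}$) converges---tested against separable pseudo-Lipschitz functions---to that of the Gaussian ensemble $\mathcal{Z}_{\A,t+1}$ (resp.\ $\mathcal{Z}_{\B,t+1}$), with the covariance fixed by the almost-sure limits $c_{\tau',\tau}[\ell]$ of the inner products $N_{\mathrm{c}}^{-1}[\ell](\vec{\boldsymbol{q}}_{\tau'}^{\mathrm{ext}}[\ell])^{\mathrm{T}}\vec{\boldsymbol{q}}_{\tau}^{\mathrm{ext}}[\ell]$ and $N_{\mathrm{c}}^{-1}[\ell](\vec{\boldsymbol{m}}_{\tau'}^{\mathrm{ext}}[\ell])^{\mathrm{T}}\vec{\boldsymbol{m}}_{\tau}^{\mathrm{ext}}[\ell]$.

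The core step exploits the right-orthogonal invariance in Assumption~\ref{assumption_A}. For each section~$\ell$, I would condition on the $\sigma$-algebra generated by the linear constraints that previous iterations impose on the Haar matrix $\boldsymbol{V}[\ell]$, namely $\boldsymbol{V}^{\mathrm{T}}[\ell]\vec{\boldsymbol{q}}_{\tau}^{\mathrm{ext}}[\ell]=\vec{\boldsymbol{b}}_{\tau}[\ell]$ from~(\ref{b}) and $\boldsymbol{V}[\ell]\vec{\boldsymbol{m}}_{\tau}^{\mathrm{ext}}[\ell]=\vec{\boldsymbol{h}}_{\tau}[\ell]$ from~(\ref{h}), for all past~$\tau$. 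Under this conditioning the Haar measure restricts to the orthogonal complement of the constrained subspaces, so the conditional law of the next message $\boldsymbol{V}^{\mathrm{T}}[\ell]\vec{\boldsymbol{q}}_{t+1}^{\mathrm{ext}}[\ell]$ splits into a deterministic linear combination of the past $\{\vec{\boldsymbol{b}}_{\tau}[\ell]\}_{\tau\leq t}$ plus a component that is asymptotically Gaussian and orthogonal to them. The Onsager coefficients $\xi_{\B,\tau,t}[\ell]$ in~(\ref{q_ext}), defined through the averaged divergences~(\ref{xi_B}), are exactly those that cancel the in-span part, leaving the orthogonal Gaussian piece that we identify with $\vec{\boldsymbol{z}}_{\B,t+1}[\ell]$; the symmetric argument using the constraints on $\boldsymbol{V}^{\mathrm{T}}[\ell]$ handles the $\boldsymbol{\phi}$-branch, with $\xi_{\A,\tau,t}[\ell]$ in~(\ref{m_ext}) cancelling the corresponding correction.

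Because the matrices $\{\boldsymbol{A}[\ell]\}$, and hence $\{\boldsymbol{V}[\ell]\}$, are independent across sections, the Gaussian components are independent across~$\ell$, consistent with the statement that $\mathcal{Z}_{\A,t+1}$ consists of independent matrices; the only inter-section coupling enters through the shared signal $\{\boldsymbol{x}[l]\}$ via the selection matrices $\boldsymbol{S}_{\mathrm{x}}[\ell]$ in $\mathcal{X}$ and through the fixed $\Omega$ and $\Lambda$, all of which are held as arguments of the test functions and are therefore transparent to the Gaussian approximation. To close the induction I would verify the covariance recursion, showing that the empirical inner products of the newly generated extended messages converge almost surely to the claimed $c_{\tau',\tau}[\ell]$; this follows by applying the level-$t$ pseudo-Lipschitz convergence to the separable functions $\boldsymbol{\phi}_{t}[\ell]$ and $\boldsymbol{\psi}_{t}[\ell]$, together with the i.i.d.\ signal structure of Assumption~\ref{assumption_x} and the almost-sure convergence of the empirical eigenvalue distribution of $|\mathcal{W}[\ell]|\boldsymbol{A}^{\mathrm{T}}[\ell]\boldsymbol{A}[\ell]$ from Assumption~\ref{assumption_A}.

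The step I expect to be the main obstacle is controlling the accumulation of errors across the long memory: since $\boldsymbol{\phi}_{t}[\ell]$ and $\boldsymbol{\psi}_{t}[\ell]$ depend on the entire history $\mathcal{B}_{t+1}$ and $\mathcal{H}_{t+1}$ rather than only the latest messages, the conditioning must be carried out jointly over all past directions and the Gram-Schmidt orthogonalization against the growing span of previous messages must remain asymptotically nondegenerate---this is precisely where the conditions $\|\vec{\boldsymbol{m}}_{t}^{\mathrm{ext}}[\ell]\|\neq0$ and $\|\vec{\boldsymbol{q}}_{t+1}^{\mathrm{ext}}[\ell]\|\neq0$ of Assumption~\ref{assumption_function} are needed. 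The properness condition, which bounds all moments of the per-coordinate Lipschitz constants, is what promotes the conditional Gaussian approximation to almost-sure convergence of the empirical averages uniformly over the finitely many sections $\ell\in\mathcal{L}_{W}$, ensuring that the error terms introduced at each of the finitely many iterations vanish in the large-system limit.
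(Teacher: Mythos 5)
Your plan follows the same route as the proof the paper defers to in \cite[Theorem 7]{Takeuchi222}: an induction over iterations using Bolthausen-style conditioning of the Haar matrices $\boldsymbol{V}[\ell]$ on the linear constraints accumulated by the past messages, with Stein's lemma identifying the divergence-based Onsager coefficients as exactly those that asymptotically orthogonalize the new extended message against the past $\{\vec{\boldsymbol{h}}_{\tau}[\ell]\}$ (resp.\ $\{\vec{\boldsymbol{b}}_{\tau}[\ell]\}$), the independence of $\{\boldsymbol{V}[\ell]\}$ across sections carrying the argument over to the spatially coupled setting, and properness plus the non-degeneracy conditions playing the roles you assign them. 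One small imprecision: in the conditional expectation of $\boldsymbol{V}^{\mathrm{T}}[\ell]\vec{\boldsymbol{q}}_{t+1}^{\mathrm{ext}}[\ell]$ the component along the past $\{\vec{\boldsymbol{b}}_{\tau}[\ell]\}$ is \emph{not} cancelled by the Onsager term---it survives as the correlated part of the new Gaussian, which should be labelled $\vec{\boldsymbol{z}}_{\A,t+1}[\ell]$ rather than $\vec{\boldsymbol{z}}_{\B,t+1}[\ell]$---whereas what the correction removes is the component along the past $\{\vec{\boldsymbol{m}}_{\tau}^{\mathrm{ext}}[\ell]\}$.
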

\begin{proof}
See \cite[Theorem 7]{Takeuchi222}. 
\end{proof}

Theorem~\ref{theorem_SE} implies asymptotic Gaussianity for 
$\vec{\boldsymbol{B}}_{t}[\ell]$ and $\vec{\boldsymbol{H}}_{t}[\ell]$, which 
is an important property in deriving state evolution recursions for LM-MP. 
By defining the two functions $\boldsymbol{\phi}_{t}[\ell]$ and 
$\boldsymbol{\psi}_{t}[\ell]$ in (\ref{m_post}) and (\ref{q_post}) 
appropriately, Theorem~\ref{theorem_SE} allows 
us to derive state evolution recursions for LM-MP. 

\begin{figure}[t]
\begin{algorithm}[H]
\caption{Orthogonal AMP with $T$ iterations} 
\label{alg1}
\begin{algorithmic}[1]
\State \label{initialization}
Let $\vec{\boldsymbol{x}}_{\B\to\A,0}[\ell]=\boldsymbol{0}$ and 
$v_{\B\to\A,0}[\ell]=\sum_{w\in\mathcal{W}[\ell]}\gamma^{2}[\ell][\ell-w]$ for all 
$\ell\in\mathcal{L}_{W}$.    
\For{$t=0,\ldots,T-1$}
\For{$\ell=0,\ldots,L+W-1$} \label{moduleA_start}

\State \label{LMMSE}
$\boldsymbol{W}_{t}[\ell]=(\sigma^{2}v_{\B\to\A,t}^{-1}[\ell]\boldsymbol{I}_{M} 
+ \boldsymbol{A}[\ell]\boldsymbol{A}^{\mathrm{T}}[\ell])^{-1}
\boldsymbol{A}[\ell]$.

\State \label{mean_A}
\hspace{-2em}$\vec{\boldsymbol{x}}_{\A,t}[\ell] 
= \vec{\boldsymbol{x}}_{\B\to\A,t}[\ell] + \boldsymbol{W}_{t}^{\mathrm{T}}[\ell]
(\boldsymbol{y}[\ell] - \boldsymbol{A}[\ell]
\vec{\boldsymbol{x}}_{\B\to\A,t}[\ell])$.

\State \label{eta_A}
$\eta_{\A,t}[\ell]= N_{\mathrm{c}}^{-1}[\ell]\mathrm{Tr}
(\boldsymbol{I}_{N_{\mathrm{c}}[\ell]} 
- \boldsymbol{W}_{t}^{\mathrm{T}}[\ell]\boldsymbol{A}[\ell])$.

\State \label{mean_AB}
$\vec{\boldsymbol{x}}_{\A\to\B,t}[\ell]
= \frac{\vec{\boldsymbol{x}}_{\A,t}[\ell] 
- \eta_{\A,t}[\ell]\vec{\boldsymbol{x}}_{\B\to\A,t}[\ell]}
{\sqrt{|\mathcal{W}[\ell]|}(1 - \eta_{\A,t}[\ell])}$. 

\State \label{var_AB}
$v_{\A\to\B,t}[\ell] 
= \frac{\eta_{\A,t}[\ell]v_{\B\to\A,t}[\ell]}
{|\mathcal{W}[\ell]|(1-\eta_{\A,t}[\ell])}$. 
\EndFor \label{moduleA_end}

\For{$l=0,\ldots,L-1$} \label{moduleB_start}
\State \label{var_suf}
$v_{\A\to \B,t}^{\mathrm{suf}}[l] 
= \left(
 \sum_{w=0}^{W}\frac{\gamma^{2}[l+w][l]}{v_{\A\to \B,t}[l+w]}
\right)^{-1}$.

\State \label{mean_suf}
Let $\vec{\boldsymbol{x}}_{\A\to\B,t}[l+w][w]\in\mathbb{R}^{N}$ denote 
the $w$th section in $\vec{\boldsymbol{x}}_{\A\to\B,t}[l+w]$ 
for $w\in\mathcal{W}[l+w]$ 
and compute $\boldsymbol{x}_{\A\to \B,t}^{\mathrm{suf}}[l]
= v_{\A\to \B,t}^{\mathrm{suf}}[l]\sum_{w=0}^{W}\gamma[l+w][l]
\frac{\vec{\boldsymbol{x}}_{\A\to\B,t}[l+w][w]}{v_{\A\to \B,t}[l+w]}$.  

\State \label{mean_B}
$\boldsymbol{x}_{\B,t+1}[l]
=f_{\mathrm{opt}}(\boldsymbol{x}_{\A\to \B,t}^{\mathrm{suf}}[l];
v_{\A\to \B,t}^{\mathrm{suf}}[l])$. 

\State \label{var_B}
$v_{\B,t+1}[l]
=\langle \mathrm{Var}(\boldsymbol{x}_{\A\to \B,t}^{\mathrm{suf}}[l];
v_{\A\to \B,t}^{\mathrm{suf}}[l]) \rangle$.
\EndFor

\For{$\ell=0,\ldots,L+W-1$}
\State \label{mean_B_extend}
$\vec{\boldsymbol{x}}_{\B,t+1}[\ell]
=\sqrt{|\mathcal{W}[\ell]}\mathrm{vec}\{
\gamma[\ell][\ell-w]\boldsymbol{x}_{\B,t+1}[\ell-w]: w\in\mathcal{W}[\ell]\}$. 

\State \label{eta_B}
$\eta_{\B,t}[\ell]
= \sum_{w\in\mathcal{W}[\ell]}
\frac{\gamma^{2}[\ell][\ell-w]
v_{\B,t+1}[\ell-w]}{v_{\A\to\B,t}[\ell]}$.  

\State \label{mean_BA}
$\vec{\boldsymbol{x}}_{\B\to\A,t+1}[\ell] = \frac{\sqrt{|\mathcal{W}[\ell]|}
\vec{\boldsymbol{x}}_{\B,t+1}[\ell] - \eta_{\B,t}[\ell]
\vec{\boldsymbol{x}}_{\A\to\B,t}[\ell]}
{\sqrt{|\mathcal{W}[\ell]|}(1 - \eta_{\B,t}[\ell]/|\mathcal{W}[\ell]|)}$. 

\State \label{var_BA}
$v_{\B\to\A,t+1}[\ell] 
= \frac{\eta_{\B,t}[\ell]v_{\A\to\B,t}[\ell]}
{1 - \eta_{\B,t}[\ell]/|\mathcal{W}[\ell]|}$.  

\State \label{mean_damp}
\hspace{-1em}$\vec{\boldsymbol{x}}_{\B\to\A,t+1}[\ell]:=
\zeta\vec{\boldsymbol{x}}_{\B\to\A,t+1}[\ell]
+(1-\zeta)\vec{\boldsymbol{x}}_{\B\to\A,t}[\ell]$. 

\State \label{var_damp}
$v_{\B\to\A,t+1}[\ell]:= \zeta v_{\B\to\A,t+1}[\ell] 
+ (1-\zeta)v_{\B\to\A,t}[\ell]$. 

\EndFor \label{moduleB_end}
\EndFor
\State Output $\boldsymbol{x}_{\mathrm{B},T}[l]$ as an estimator of 
$\boldsymbol{x}[l]$ for all $l\in\mathcal{L}_{0}$. 
\end{algorithmic}
\end{algorithm}
\end{figure}

\section{Orthogonal AMP}
We propose Bayes-optimal OAMP for the spatially coupled 
system~(\ref{vector_system}) on the basis of Theorem~\ref{theorem_SE}. 
See Algorithm~\ref{alg1} for the details of the proposed algorithm. 
Lines~\ref{moduleA_start}--\ref{moduleA_end} and 
lines~\ref{moduleB_start}--\ref{moduleB_end} correspond to 
the LMMSE estimation---called module~A---and element-wise nonlinear 
estimation---called module~B---respectively. 

Module~A computes the LMMSE estimator of $\vec{\boldsymbol{x}}[\ell]$ in 
line~\ref{mean_A}. Then, the Onsager correction in lines~\ref{mean_AB} 
and~\ref{var_AB} is performed to realize the asymptotic Gaussianity for the 
error vector 
$\vec{\boldsymbol{h}}_{t}[\ell]=\vec{\boldsymbol{x}}_{\A\to\B,t}[\ell] 
- |\mathcal{W}[\ell]|^{-1/2}\vec{\boldsymbol{x}}[\ell]$. 

Module~B transforms the message $\vec{\boldsymbol{x}}_{\A\to\B,t}[\ell]$ 
in the extended signal space $\mathbb{R}^{N_{\mathrm{c}}[\ell]}$ into 
the sufficient statistic $\boldsymbol{x}_{\A\to \B,t}^{\mathrm{suf}}[l]$ for  
the original signal vector $\boldsymbol{x}[l]\in\mathbb{R}^{N}$ in 
line~\ref{mean_suf}. After denoising in line~\ref{mean_B} and transforming 
$\boldsymbol{x}_{\B,t+1}[l]\in\mathbb{R}^{N}$ into 
$\vec{\boldsymbol{x}}_{\B,t+1}[\ell]$ for the extended signal space 
in line~\ref{mean_B_extend}, the Onsager correction is computed in 
lines~\ref{mean_BA} and \ref{var_BA}. Lines~\ref{mean_damp} and 
\ref{var_damp} represent the damping steps with $\zeta\in(0,1]$ to improve 
the convergence property of OAMP for finite $M$ and $N$. 

The functions $f_{\mathrm{opt}}$ and $\mathrm{Var}$ in lines~\ref{mean_B} and 
\ref{var_B} denote the Bayes-optimal denoiser 
$f_{\mathrm{opt}}(u;v)=\mathbb{E}[x_{1}[0] | x_{1}[0]+\sqrt{v}z=u]$ and 
the corresponding posterior variance $\mathrm{Var}(u;v)
=\mathbb{E}[\{x_{1}[0]-f_{\mathrm{opt}}(u;v)\}^{2}| x_{1}[0]+\sqrt{v}z=u]$, 
with $z\sim\mathcal{N}(0,1)$. The validity of these definitions is 
justified in the large system limit via Theorem~\ref{theorem_SE}. We have 
used a popular notation in the MP community for lines~\ref{mean_B} and 
\ref{var_B}: For a scalar function $f: \mathbb{R}\to\mathbb{R}$ the notation 
$f(\boldsymbol{x})$ denotes the element-wise application of $f$ to the vector 
$\boldsymbol{x}$, i.e.\ $[f(\boldsymbol{x})]_{i}=f([\boldsymbol{x}]_{i})$.  

Assume no damping $\zeta=1$ in lines~\ref{mean_damp} and \ref{var_damp} 
for theoretical analysis. As proved in \cite[Lemma~8]{Takeuchi222}, 
the dynamics of the error vectors  
$\vec{\boldsymbol{h}}_{t}[\ell]=\vec{\boldsymbol{x}}_{\A\to\B,t}[\ell] 
- |\mathcal{W}[\ell]|^{-1/2}\vec{\boldsymbol{x}}[\ell]$ and 
$\vec{\boldsymbol{q}}_{t}^{\mathrm{ext}}[\ell]=
\vec{\boldsymbol{x}}_{\B\to\A,t}[\ell] - \vec{\boldsymbol{x}}[\ell]$ can be 
described with the general error model~(\ref{b})--(\ref{q_ext}). State 
evolution results for OAMP are obtained via the unified framework of 
state evolution in Theorem~\ref{theorem_SE}.  

We first define state evolution recursions for OAMP. To distinguish 
variables in the state evolution recursions from the variance parameters 
in OAMP, we use the notation $\bar{v}_{\B\to\A,t}[\ell]$, 
$\bar{\eta}_{\A,t}[\ell]$, $\bar{v}_{\A\to\B,t}[\ell]$, 
$\bar{v}_{\A\to \B,t}^{\mathrm{suf}}[l]$, $\bar{v}_{\B,t+1}[l]$, 
$\bar{\eta}_{\B,t}[\ell]$, and $\bar{v}_{\B\to\A,t+1}[\ell]$ instead of 
those without the bars. They are recursively defined in the same manner 
as in lines~\ref{initialization}, \ref{LMMSE}, \ref{eta_A}, \ref{var_AB}, 
\ref{var_suf}, \ref{var_B}, \ref{eta_B}, and \ref{var_BA}. The exceptional 
steps are in lines~\ref{eta_A} and \ref{var_B}, which are replaced with 
\begin{equation}
\bar{\eta}_{\A,t}[\ell] 
= \lim_{M=\delta N\to\infty}\frac{1}{N_{\mathrm{c}}[\ell]}\mathrm{Tr}
\left(
 \boldsymbol{I}_{N_{\mathrm{c}}[\ell]} 
 - \boldsymbol{W}_{t}^{\mathrm{T}}[\ell]\boldsymbol{A}[\ell]
\right) 
\end{equation}
in the large system limit and 
\begin{equation} \label{MMSE}
\bar{v}_{\B,t+1}[l]
=\mathbb{E}\left[
 \{x_{1}[0]-f_{\mathrm{opt}}(x_{1}[0]+\sqrt{v}z;v)\}^{2}
\right]
\end{equation} 
with $v=\bar{v}_{\A\to \B,t}^{\mathrm{suf}}[l]$, respectively. The limit 
in $\bar{\eta}_{\A,t}[\ell]$ can be evaluated in closed form when the 
asymptotic eigenvalue distribution of $\boldsymbol{A}^{\mathrm{T}}[\ell]
\boldsymbol{A}[\ell]$ is available. Note that the obtained 
state evolution recursions are deterministic. 

\begin{theorem} \label{theorem_OAMP}
Consider no damping $\zeta=1$ and postulate Assumptions~\ref{assumption_x} 
and \ref{assumption_A}. Furthermore, suppose that the Bayes-optimal denoiser 
$f_{\mathrm{opt}}$ is Lipschitz-continuous and nonlinear. 
\begin{itemize}
\item The empirical error covariance $N^{-1}(\boldsymbol{x}_{\B,t'+1}[l]
-\boldsymbol{x}[l])(\boldsymbol{x}_{\B,t+1}[l]
-\boldsymbol{x}[l])$ for OAMP converges almost surely to 
$\bar{v}_{\B,t+1}[l]$ given in (\ref{MMSE}) for 
all $t'\in\{0,\ldots,t\}$ in the large system limit. 
\item The state evolution recursions for Bayes-optimal OAMP converge to a 
fixed point as $t\to\infty$. 
\end{itemize}
\end{theorem}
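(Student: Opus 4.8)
The plan is to address the two claims separately, using Theorem~\ref{theorem_SE} as the engine for the first and a monotone-convergence argument for the second. For the first claim I would start from the fact---quoted above from \cite[Lemma~8]{Takeuchi222}---that under $\zeta=1$ the OAMP error vectors $\vec{\boldsymbol{h}}_{t}[\ell]$ and $\vec{\boldsymbol{q}}_{t}^{\mathrm{ext}}[\ell]$ obey the general error model~(\ref{b})--(\ref{q_ext}), with $\boldsymbol{\phi}_{t}[\ell]$ realizing the LMMSE filter and Onsager correction of module~A and $\boldsymbol{\psi}_{t}[\ell]$ realizing the Bayes-optimal denoiser and Onsager correction of module~B. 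The Lipschitz hypothesis on $f_{\mathrm{opt}}$ together with Assumption~\ref{assumption_function} ensures these functions meet the separability and proper pseudo-Lipschitz requirements of Theorem~\ref{theorem_SE}. The theorem then yields asymptotic Gaussianity of $\mathcal{H}_{t+1}$, whence the sufficient statistic $\boldsymbol{x}_{\A\to\B,t}^{\mathrm{suf}}[l]$ formed by the linear combination~(\ref{mean_suf}) is itself asymptotically Gaussian, distributed as $\boldsymbol{x}[l]+\sqrt{\bar{v}_{\A\to\B,t}^{\mathrm{suf}}[l]}\,\boldsymbol{z}$ with standard Gaussian $\boldsymbol{z}$ and jointly Gaussian across iterations. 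Choosing $\tilde{\boldsymbol{\psi}}_{t}$ built from $f_{\mathrm{opt}}$ and invoking~(\ref{psi_SLLN}) identifies the diagonal limit $t'=t$ with $\bar{v}_{\B,t+1}[l]$ of~(\ref{MMSE}).

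The off-diagonal case $t'<t$ is where Bayes-optimality enters. Because $f_{\mathrm{opt}}(\cdot;v)$ is a conditional mean, the limiting error $\boldsymbol{x}_{\B,t+1}[l]-\boldsymbol{x}[l]$ is orthogonal to every square-integrable function of the effective iteration-$t$ observation. The monotone decrease of $\bar{v}_{\A\to\B,t}^{\mathrm{suf}}[l]$---which I establish in the second claim---renders this observation a sufficient statistic subsuming the iteration-$t'$ one, so that $\boldsymbol{x}_{\B,t'+1}[l]-\boldsymbol{x}[l]$ splits into the iteration-$t$ error plus a term orthogonal to it. The cross term therefore collapses and the empirical cross-covariance converges almost surely to the single value $\bar{v}_{\B,t+1}[l]$, which is the signature of Bayes-optimal denoising.

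For the second claim I would run a monotone-convergence argument directly on the deterministic state evolution. Viewing one sweep of module~A followed by module~B as a map $\Phi$ on the variance vector $\{\bar{v}_{\B\to\A,t}[\ell]\}_{\ell\in\mathcal{L}_{W}}$ under the component-wise order, I would establish monotonicity of $\Phi$ from three order-preserving facts: $\bar{v}_{\A\to\B,t}[\ell]$ is non-decreasing in $\bar{v}_{\B\to\A,t}[\ell]$ through the eigenvalue-distribution formula for $\bar{\eta}_{\A,t}[\ell]$; the harmonic-mean combination in the sufficient-statistic steps~(\ref{var_suf})--(\ref{mean_suf}) is non-decreasing in each section's variance; and the Bayes-optimal MMSE map $v\mapsto\mathbb{E}[\{x_{1}[0]-f_{\mathrm{opt}}(x_{1}[0]+\sqrt{v}z;v)\}^{2}]$ of~(\ref{MMSE}) is non-decreasing. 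I would then verify that the initialization $\bar{v}_{\B\to\A,0}[\ell]=\sum_{w\in\mathcal{W}[\ell]}\gamma^{2}[\ell][\ell-w]$ is the component-wise largest admissible point, so $\Phi$ cannot raise any component; monotonicity and induction then force the sequence to be component-wise non-increasing and bounded below by zero, hence convergent, with limit a fixed point by continuity of $\Phi$.

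The main obstacle is the monotonicity of $\Phi$ in the spatially coupled regime. Unlike the uncoupled analyses of \cite{Takeuchi221,Liu221}, the sufficient-statistic steps~(\ref{var_suf})--(\ref{mean_suf}) couple neighbouring sections through the coefficients $\gamma[\ell][l]$, so a perturbation of one section's variance propagates along the chain and I must verify that this propagation is order-preserving rather than merely that each isolated step is. I expect to control it by exploiting the non-negativity of the weights $\gamma^{2}[\ell][l]$ in the harmonic mean, which keeps the coupling monotone, together with Assumption~\ref{assumption_A}---the compactly supported, unit-mean eigenvalue law---to confine all intermediate variances to a fixed compact set and thereby secure both the boundedness below and the continuity of $\Phi$ that the argument needs.
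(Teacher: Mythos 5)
The reduction of the first bullet to the general error model~(\ref{b})--(\ref{q_ext}) plus Theorem~\ref{theorem_SE}, and the identification of the diagonal case $t'=t$ with (\ref{MMSE}), match the paper. The genuine gap is in your treatment of the off-diagonal case $t'<t$. You argue that because $\bar{v}_{\A\to\B,t}^{\mathrm{suf}}[l]\le\bar{v}_{\A\to\B,t'}^{\mathrm{suf}}[l]$, the iteration-$t$ effective observation is a sufficient statistic ``subsuming'' the iteration-$t'$ one, so the two errors split orthogonally. But Theorem~\ref{theorem_SE} only gives you that the effective noises at iterations $t'$ and $t$ are jointly Gaussian with some cross-covariance $c_{t',t}[\ell]$ produced by the covariance state evolution; a smaller marginal variance at iteration $t$ does not make the iteration-$t'$ observation a degraded (Markov) version of the iteration-$t$ observation. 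That degradedness is equivalent to the identity $c_{t',t}=c_{t,t}$ for the effective noises, which is essentially the statement you are trying to prove---the argument is circular. This is precisely why the paper's proof does not stay within memoryless OAMP: it constructs LM-OAMP, tracks the full covariance matrices $\{c_{\tau',\tau}[\ell]\}$ via Theorem~\ref{theorem_SE}, proves they converge and collapse onto the memoryless recursion, and only then reads off the off-diagonal claim. A related slip: you invoke Assumption~\ref{assumption_function} as a hypothesis, whereas in Theorem~\ref{theorem_OAMP} it is not postulated and must be \emph{derived}---the nonlinearity of $f_{\mathrm{opt}}$ is in the statement exactly to establish the non-degeneracy conditions $\|\vec{\boldsymbol{m}}_{t}^{\mathrm{ext}}[\ell]\|\neq0$ and $\|\vec{\boldsymbol{q}}_{t+1}^{\mathrm{ext}}[\ell]\|\neq0$, which your proposal never checks.

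For the second bullet your direct monotone-map argument is a genuinely different route from the paper's LM-MP strategy, and it is not obviously doomed, but its load-bearing step is asserted rather than proved. What must be monotone is not the posterior MMSE (that part is standard) but the \emph{extrinsic} variance maps in lines~\ref{var_AB} and~\ref{var_BA} of Algorithm~\ref{alg1}, i.e.\ $v\mapsto(\mathrm{mmse}(v)^{-1}-v^{-1})^{-1}$; this requires the inequality $\frac{d}{dv}\mathrm{mmse}(v)\ge \mathrm{mmse}(v)^{2}/v^{2}$, which for module~B needs the identity expressing the MMSE derivative through the second moment of the conditional variance together with Jensen, and for module~A a Cauchy--Schwarz argument over the limiting eigenvalue distribution. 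Without these, the sweep map $\Phi$ need not be order-preserving and the induction collapses. Even if completed, this route yields only convergence of the variance sequence; it supplies none of the cross-iteration covariance information that your proof of the first bullet leans on, which is exactly what the paper's LM-OAMP detour is designed to produce.
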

\begin{proof}
The assumptions on the Bayes-optimal denoiser, as well as 
Assumption~\ref{assumption_A}, are used to prove 
Assumption~\ref{assumption_function}. 
The former statement is proved by confirming the inclusion of the error model 
for OAMP into the general error model~(\ref{b})--(\ref{q_ext}). The proof of 
the latter statement is based on the LM-MP strategy in \cite{Takeuchi221}: 
LM-OAMP is constructed and evaluated via Theorem~\ref{theorem_SE}. 
Furthermore, the obtained state evolution recursions for LM-OAMP are proved 
to converge and to be equivalent to those for OAMP without memory.  
See \cite[Theorem~4]{Takeuchi222} for the details. 
\end{proof}

Theorem~\ref{theorem_OAMP} justifies the definitions of the variance 
parameters in OAMP. The asymptotic performance of OAMP can be evaluated 
by solving the state evolution resursions for OAMP. 

\section{Numerical Results}
OAMP for the spatially coupled system~(\ref{spatial_coupling}) is numerically 
compared to that for the conventional system without spatial coupling. 
We used the uniform coupling coefficient $\gamma[\ell][\ell-w]=(W+1)^{-1/2}$ 
in (\ref{spatial_coupling}). In particular, $W=0$ implies no spatial coupling. 

The elements of the signal vector $\boldsymbol{x}[l]$ were sampled 
from $\mathcal{N}(0, 1/\rho)$ with probability~$\rho\in[0, 1]$ uniformly and 
randomly. Otherwise, they took zero with probability $1-\rho$. 

The sensing matrices $\{\boldsymbol{A}[\ell][l]\}$ or equivalently 
$\boldsymbol{A}[\ell]$ in (\ref{vector_system}) was postulated to have the 
SVD $\boldsymbol{A}[\ell]=\boldsymbol{\Sigma}[\ell]
\boldsymbol{V}^{\mathrm{T}}[\ell]$. The singular values in 
$\boldsymbol{\Sigma}[\ell]$ are uniquely determined from condition number 
$\kappa\geq1$ and power normalization. See \cite[Corollary 6]{Takeuchi222} 
for the details. The $|\mathcal{W}[\ell]|N\times |\mathcal{W}[\ell]|N$ 
orthogonal matrix $\boldsymbol{V}[\ell]$ is the Hadamard matrix with random 
column permutation, which is a low-complexity alternative of Haar-distributed 
orthogonal matrices in Assumption~\ref{assumption_A}.  

\begin{figure}[t]
\begin{center}
\includegraphics[width=\hsize]{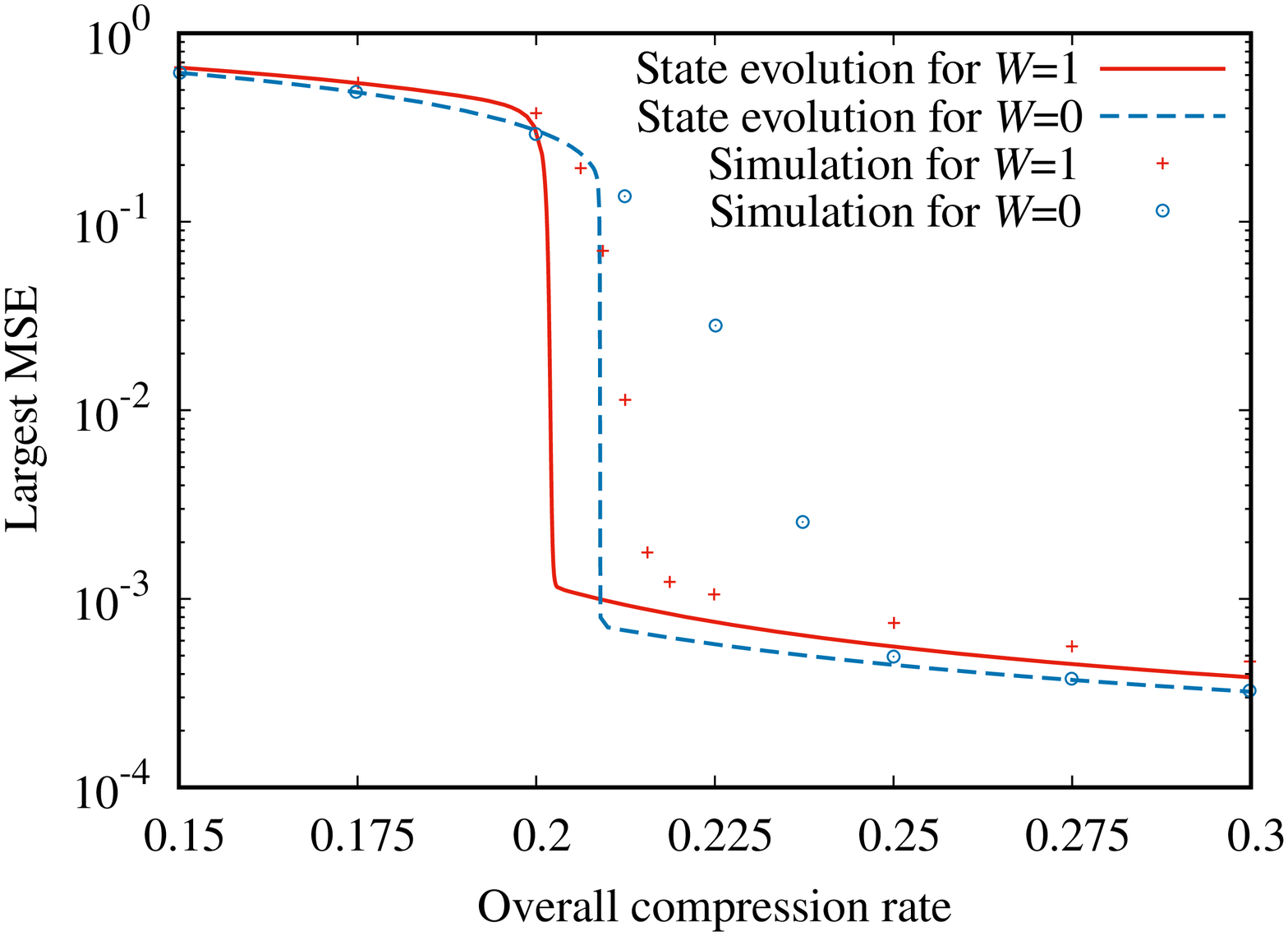}
\caption{
Largest MSE versus the overall compression rate $(1+W/L)\delta$ for 
OAMP, sensing matrices with $\kappa=10$, 
$(L,W)=(16, 1), (1,0)$, 
$N=2^{12}$, $\rho=0.1$, $1/\sigma^{2}=30$~dB, and $200$~iterations. 
}
\label{fig1} 
\end{center}
\end{figure}

Figure~\ref{fig1} shows the largest mean-square error (MSE) of OAMP over 
all sections. $10^4$ independent trials were simulated for the spatially 
coupled system with $(L,W)=(16,1)$ while $10^5$ independent trials were 
simulated for $(L,W)=(1,0)$. 
The damping factor $\zeta$ in lines~\ref{mean_damp} and 
\ref{var_damp} of Algorithm~\ref{alg1} was optimized for each $\delta=M/N$ 
via exhaustive search. 

OAMP for the spatial coupling case $W=1$ is superior to for $W=0$ in the 
so-called waterfall region, where the MSE decreases rapidly as the compression 
rate increases slightly. This result is consistent with existing results 
on spatial coupling~\cite{Kudekar11,Krzakala12,Donoho13,Takeuchi15}. 
On the other hand, the MSE in the spatial coupling case degrades slightly 
for large $\delta$. The latter result is a peculiar phenomenon for the 
spatially coupled system with orthogonally invariant sensing matrices. 

This phenomenon results from two reasons: A minor reason is due to the loss 
in the overall compression rate $(1+W/L)\delta$. This influence vanishes in 
the limit $L\to\infty$, as observed in 
\cite{Kudekar11,Krzakala12,Donoho13,Takeuchi15}. The other major reason is 
in the $|\mathcal{W}[\ell]|$-dependencies of the empirical eigenvalue 
distribution of $\boldsymbol{A}^{\mathrm{T}}[\ell]\boldsymbol{A}[\ell]
\in\mathbb{R}^{|\mathcal{W}[\ell]|N\times|\mathcal{W}[\ell]|N}$ in (\ref{vector_system}). 
To reduce the latter influence---remains even in the limit 
$L\to\infty$---small coupling width $W$ should be used.

\bibliographystyle{IEEEtran}
\bibliography{IEEEabrv,kt-icassp2023}

\end{document}